\theoremstyle{definition}
\newtheorem{lemma}{Lemma}
\theoremstyle{remark}
\definecolor{blk}{RGB}{63,63,63}
\newcommand*{\mybox}[1]{%
  \framebox{\raisebox{0cm}[0.5\baselineskip][0.05\baselineskip]{%
    \hbox to 0.10cm {\hss#1\hss}}}\hspace{0.05cm}}
\begin{document}
\title{Tatami Printer: Physical ZKPs for Tatami Puzzles}
\author[1]{Suthee Ruangwises\thanks{\texttt{suthee@cp.eng.chula.ac.th}}}
\affil[1]{Department of Computer Engineering, Chulalongkorn University, Bangkok, Thailand}
\date{}
\maketitle

\begin{abstract}
Tatami puzzles are pencil puzzles with an objective to partition a rectangular grid into rectangular regions such that no four regions share a corner point, as well as satisfying other constraints. In this paper, we develop a physical card-based protocol called \textit{Tatami printer} that can help verify solutions of Tatami puzzles. We then use the Tatami printer to construct zero-knowledge proof protocols for two such puzzles: Tatamibari and Square Jam. These protocols enable a prover to show a verifier the existence of the puzzles' solutions without revealing them.

\textbf{Keywords:} zero-knowledge proof, card-based cryptography, Tatamibari, Square Jam, puzzle
\end{abstract}

\section{Introduction}
Pencil puzzles are puzzles written on paper that can be solved using logical reasoning. Examples of them include Sudoku, Numberlink, and Nonogram. Pencil puzzles are classified into several categories based on their main theme. Puzzles involving partitioning a rectangular grid into multiple regions to satisfy certain rules are often called \textit{decomposition puzzles}.

Many decomposition puzzles, especially the ones originated in Japan, specify that the regions must be rectangular, and also have a common \textit{corner constraint}: no four regions can share a corner point. This rule came from the arrangement of Japanese Tatami mats. Hence, these puzzles can be collectively called \textit{Tatami puzzles}, a subcategory of decomposition puzzles \cite{puzz}.

\subsection{Tatamibari}
\textit{Tatamibari} is a Tatami puzzle developed by a Japanese publisher Nikoli. The puzzle consists of an $m \times n$ rectangular grid, with some cells containing a $+$, $|$, or $-$ symbol. The objective of this puzzle is to partition the grid into rectangles, with each one containing exactly one symbol.
\begin{itemize}
	\item If the symbol is a $+$, that rectangle must be a square.
	\item If the symbol is a $|$, that rectangle must have a greater height than width.
	\item If the symbol is a $-$, that rectangle must have a greater width than height.
\end{itemize}
In addition, no four rectangles can share a corner point \cite{janko2}. See Fig. \ref{fig0}.

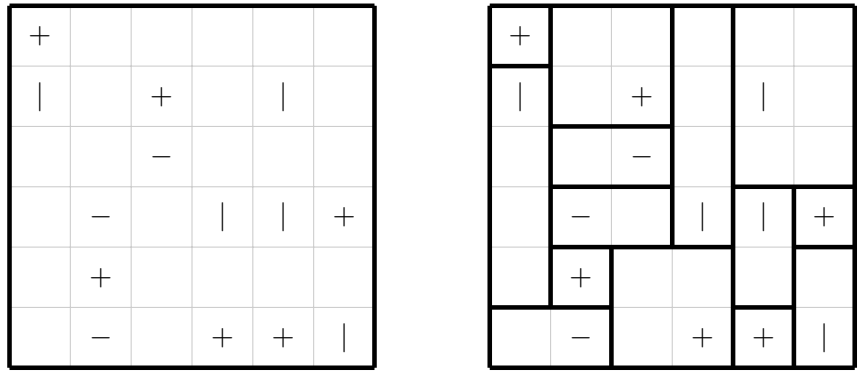
\begin{figure}
\centering
\begin{tikzpicture}
\draw[step=0.8cm,color={rgb:black,1;white,4}] (0,0) grid (4.8,4.8);

\draw[line width=0.6mm] (0,0) -- (4.8,0);
\draw[line width=0.6mm] (0,4.8) -- (4.8,4.8);
\draw[line width=0.6mm] (0,0) -- (0,4.8);
\draw[line width=0.6mm] (4.8,0) -- (4.8,4.8);

\node at (1.2,0.4) {$-$};
\node at (2.8,0.4) {$+$};
\node at (3.6,0.4) {$+$};
\node at (4.4,0.4) {$|$};
\node at (1.2,1.2) {$+$};
\node at (1.2,2) {$-$};
\node at (2.8,2) {$|$};
\node at (3.6,2) {$|$};
\node at (4.4,2) {$+$};
\node at (2,2.8) {$-$};
\node at (0.4,3.6) {$|$};
\node at (2,3.6) {$+$};
\node at (3.6,3.6) {$|$};
\node at (0.4,4.4) {$+$};
\end{tikzpicture}
\hspace{1.2cm}
\begin{tikzpicture}
\draw[step=0.8cm,color={rgb:black,1;white,4}] (0,0) grid (4.8,4.8);

\draw[line width=0.6mm] (0,0) -- (4.8,0);
\draw[line width=0.6mm] (0,4.8) -- (4.8,4.8);
\draw[line width=0.6mm] (0,0) -- (0,4.8);
\draw[line width=0.6mm] (4.8,0) -- (4.8,4.8);

\draw[line width=0.6mm] (0,0.8) -- (1.6,0.8);
\draw[line width=0.6mm] (3.2,0.8) -- (4,0.8);
\draw[line width=0.6mm] (0.8,1.6) -- (3.2,1.6);
\draw[line width=0.6mm] (4,1.6) -- (4.8,1.6);
\draw[line width=0.6mm] (0.8,2.4) -- (2.4,2.4);
\draw[line width=0.6mm] (3.2,2.4) -- (4.8,2.4);
\draw[line width=0.6mm] (0.8,3.2) -- (2.4,3.2);
\draw[line width=0.6mm] (0,4) -- (0.8,4);

\draw[line width=0.6mm] (0.8,0.8) -- (0.8,4.8);
\draw[line width=0.6mm] (1.6,0) -- (1.6,1.6);
\draw[line width=0.6mm] (2.4,1.6) -- (2.4,4.8);
\draw[line width=0.6mm] (3.2,0) -- (3.2,4.8);
\draw[line width=0.6mm] (4,0) -- (4,2.4);

\node at (1.2,0.4) {$-$};
\node at (2.8,0.4) {$+$};
\node at (3.6,0.4) {$+$};
\node at (4.4,0.4) {$|$};
\node at (1.2,1.2) {$+$};
\node at (1.2,2) {$-$};
\node at (2.8,2) {$|$};
\node at (3.6,2) {$|$};
\node at (4.4,2) {$+$};
\node at (2,2.8) {$-$};
\node at (0.4,3.6) {$|$};
\node at (2,3.6) {$+$};
\node at (3.6,3.6) {$|$};
\node at (0.4,4.4) {$+$};
\end{tikzpicture}
\caption{An example of a $6 \times 6$ Tatamibari puzzle (left) and its solution (right)}
\label{fig0}
\end{figure}

Deciding whether a given Tatamibari puzzle has a solution has recently been proved to be NP-complete \cite{np}.

\subsection{Square Jam}
\textit{Square Jam} is another Tatami puzzle developed by Eric Fox. The puzzle consists of an $n \times n$ square grid, with some cells containing a positive integer. The objective of this puzzle is to partition the grid into squares; each square may contain any number (including zero) of cells with an integer, but the integers in those cells must all be equal, and also equal to the side length of that square. In addition, no four squares can share a corner point \cite{janko}. See Fig. \ref{fig00}.

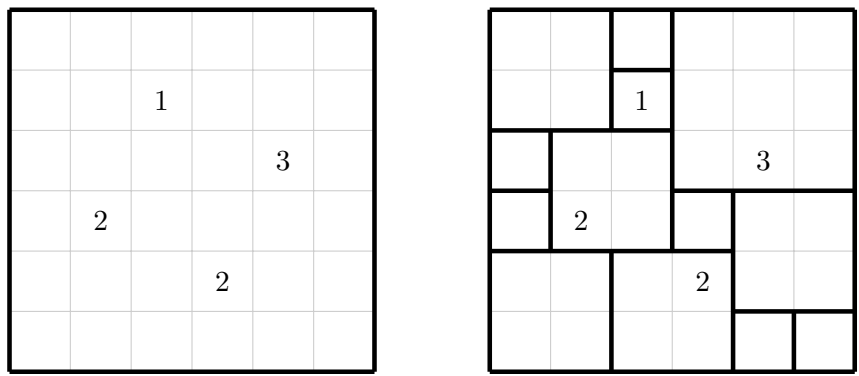
\begin{figure}
\centering
\begin{tikzpicture}
\draw[step=0.8cm,color={rgb:black,1;white,4}] (0,0) grid (4.8,4.8);

\draw[line width=0.6mm] (0,0) -- (4.8,0);
\draw[line width=0.6mm] (0,4.8) -- (4.8,4.8);
\draw[line width=0.6mm] (0,0) -- (0,4.8);
\draw[line width=0.6mm] (4.8,0) -- (4.8,4.8);

\node at (2.8,1.2) {2};
\node at (1.2,2) {2};
\node at (3.6,2.8) {3};
\node at (2,3.6) {1};
\end{tikzpicture}
\hspace{1.2cm}
\begin{tikzpicture}
\draw[step=0.8cm,color={rgb:black,1;white,4}] (0,0) grid (4.8,4.8);

\draw[line width=0.6mm] (0,0) -- (4.8,0);
\draw[line width=0.6mm] (0,4.8) -- (4.8,4.8);
\draw[line width=0.6mm] (0,0) -- (0,4.8);
\draw[line width=0.6mm] (4.8,0) -- (4.8,4.8);

\draw[line width=0.6mm] (3.2,0.8) -- (4.8,0.8);
\draw[line width=0.6mm] (0,1.6) -- (3.2,1.6);
\draw[line width=0.6mm] (0,2.4) -- (0.8,2.4);
\draw[line width=0.6mm] (2.4,2.4) -- (4.8,2.4);
\draw[line width=0.6mm] (0,3.2) -- (2.4,3.2);
\draw[line width=0.6mm] (1.6,4) -- (2.4,4);

\draw[line width=0.6mm] (0.8,1.6) -- (0.8,3.2);
\draw[line width=0.6mm] (1.6,0) -- (1.6,1.6);
\draw[line width=0.6mm] (1.6,3.2) -- (1.6,4.8);
\draw[line width=0.6mm] (2.4,1.6) -- (2.4,4.8);
\draw[line width=0.6mm] (3.2,0) -- (3.2,2.4);
\draw[line width=0.6mm] (4,0) -- (4,0.8);

\node at (2.8,1.2) {2};
\node at (1.2,2) {2};
\node at (3.6,2.8) {3};
\node at (2,3.6) {1};
\end{tikzpicture}
\caption{An example of a $6 \times 6$ Square Jam puzzle (left) and its solution (right)}
\label{fig00}
\end{figure}

\subsection{Zero-Knowledge Proof}
Suppose Agnes designed a difficult Tatami puzzle and challenged her friend Brian to solve it. After a while, he could not solve the puzzle and began to doubt whether it actually has a solution. Agnes needs to convince him that her puzzle has a solution \textit{without} revealing it (which would render the challenge pointless). A \textit{zero-knowledge proof (ZKP)} makes this difficult task possible.

Introduced by Goldwasser et al. \cite{zkp0} in 1989, a ZKP is an interactive protocol between a prover $P$ and a verifier $V$. Both of them are given a computational problem $x$, but its solution $w$ is known to only $P$. A ZKP with perfect completeness and soundness must satisfy the following properties.

\begin{enumerate}
	\item \textbf{Perfect Completeness:} If $P$ knows $w$, then $V$ always accepts.
	\item \textbf{Perfect Soundness:} If $P$ does not know $w$, then $V$ always rejects.
	\item \textbf{Zero-knowledge:} $V$ obtains no information about $w$, i.e. there is a probabilistic polynomial time algorithm $S$ (called a \textit{simulator}), not knowing $w$ but having an access to $V$, such that the outputs of $S$ and of the actual protocol follow exactly the same probability distribution.
\end{enumerate}

In 1991, Goldreich et al. \cite{zkp} proved that a ZKP exists for every NP problem, implying one can construct a computational ZKP for any NP pencil puzzle via a reduction. Such construction, however, is neither practical nor intuitive as it requires cryptographic primitives. Instead, many researchers focused on developing physical ZKPs using objects found in everyday life such as a deck of playing cards. These card-based protocols have the benefit that they do not require computers and also allow external observers to verify that the prover truthfully executes them (which is often a challenging task for digital protocols). They also have didactic values and can be used to teach the concept of ZKP to non-experts.

\subsection{Card-based Zero-Knowledge Proof Protocols}
There is a line of work dedicated to design physical card-based ZKP protocols for specific pencil puzzles, including ABC End View \cite{goishi}, Goishi Hiroi \cite{goishi}, Heyawake \cite{nurikabe}, Hitori \cite{nurikabe}, Juosan \cite{takuzu}, Kakuro \cite{kakuro}, Masyu \cite{slitherlink}, Nonogram \cite{nonogram2}, Numberlink \cite{numberlink}, Nurikabe \cite{nurikabe}, Ripple Effect \cite{ripple}, Shikaku \cite{shikaku}, Slitherlink \cite{slitherlink}, Sudoku \cite{sudoku}, Sumplete \cite{sumplete}, Takuzu \cite{takuzu}, and Toichika \cite{goishi}.

In 2024, Ruangwises and Iwamoto \cite{decom} proposed a more generic protocol called \textit{printing protocol}, which can help verify solutions of several decomposition puzzles such as Five Cells and Meadows. Their protocol, however, has a limitation that it cannot check constraints involving relationships between multiple regions, and thus cannot be used to verify solutions of Tatami puzzles.

\subsection{Our Contribution}
In this paper, we propose a card-based protocol called \textit{Tatami printer}, which is modified from the printing protocol of Ruangwises and Iwamoto \cite{decom}.

The Tatami printer can print numbers or symbols from a template onto a target area from the puzzle grid, constituting a new rectangular region in the grid. It also verifies that the printed region does not overlap with existing regions, and that no four regions share a corner point.

We then use the Tatami printer to construct ZKP protocols for two Tatami puzzles: Tatamibari and Square Jam.

\section{Preliminaries}
\subsection{Cards}
In our protocol, each card may have a number or a symbol written on the front side (e.g. \hbox{\mybox{1},} \mybox{$+$}, \mybox{$\heartsuit$}), or may be a blank card (denoted by \mybox{}). All cards have indistinguishable back sides denoted by \mybox{?}.

\subsection{Pile-Shifting Shuffle}
Given a matrix $M$ of cards, a \textit{pile-shifting shuffle} \cite{polygon} rearranges the columns of $M$ by a uniformly random cyclic shift. It can be implemented by putting all cards in each column into an envelope, and repeatedly picking some envelopes from the bottom and putting them on the top of the pile.

Each card in $M$ can be replaced by a stack of cards (as long as every stack in the same row has the same number of cards), and the protocol still works in the same way.

\subsection{Chosen Cut Protocol} \label{chosen}
Given a sequence $C = (c_1,c_2,...,c_q)$ of $q$ face-down cards, a \textit{chosen cut protocol} \cite{koch} enables $P$ to select a desired card $c_i$ without revealing $i$ to $V$. It also reverts $C$ back to its original state after $P$ finishes using $c_i$.

\begin{figure}
\centering
\begin{tikzpicture}
\node at (0.0,2.4) {\mybox{?}};
\node at (0.6,2.4) {\mybox{?}};
\node at (1.2,2.4) {...};
\node at (1.8,2.4) {\mybox{?}};
\node at (2.4,2.4) {\mybox{?}};
\node at (3.0,2.4) {\mybox{?}};
\node at (3.6,2.4) {...};
\node at (4.2,2.4) {\mybox{?}};

\node at (0.0,2) {$c_1$};
\node at (0.6,2) {$c_2$};
\node at (1.8,2) {$c_{i-1}$};
\node at (2.4,2) {$c_i$};
\node at (3.0,2) {$c_{i+1}$};
\node at (4.2,2) {$c_q$};

\node at (0.0,1.4) {\mybox{?}};
\node at (0.6,1.4) {\mybox{?}};
\node at (1.2,1.4) {...};
\node at (1.8,1.4) {\mybox{?}};
\node at (2.4,1.4) {\mybox{?}};
\node at (3.0,1.4) {\mybox{?}};
\node at (3.6,1.4) {...};
\node at (4.2,1.4) {\mybox{?}};

\node at (0.0,1) {0};
\node at (0.6,1) {0};
\node at (1.8,1) {0};
\node at (2.4,1) {1};
\node at (3.0,1) {0};
\node at (4.2,1) {0};

\node at (0.0,0.4) {\mybox{1}};
\node at (0.6,0.4) {\mybox{0}};
\node at (1.2,0.4) {...};
\node at (1.8,0.4) {\mybox{0}};
\node at (2.4,0.4) {\mybox{0}};
\node at (3.0,0.4) {\mybox{0}};
\node at (3.6,0.4) {...};
\node at (4.2,0.4) {\mybox{0}};
\end{tikzpicture}
\caption{A $3 \times q$ matrix $M$ constructed in Step 1 of the chosen cut protocol}
\label{fig2}
\end{figure}

In the chosen cut protocol, $P$ performs the following steps.
\begin{enumerate}
	\item Construct the following $3 \times q$ matrix $M$ (see Fig. \ref{fig2}).
	\begin{enumerate}
		\item In Row 1, place the sequence $C$.
		\item In Row 2, place a face-down \mybox{1} at Column $i$ and face-down \mybox{0}s at all other columns.
		\item In Row 3, place a face-up \mybox{1} at Column 1 and face-up \mybox{0}s at all other columns.
	\end{enumerate}
	\item Turn all cards face-down. Apply the pile-shifting shuffle to $M$.
	\item Turn over all cards in Row 2 and locate the position of the only \mybox{1}. A card in Row 1 directly above this \mybox{1} will be the desired card $c_i$.
	\item After finishing using $c_i$, place $c_i$ back into $M$ at the same position.
	\item Turn all cards face-down. Apply the pile-shifting shuffle to $M$ again.
	\item Turn over all cards in Row 3 and locate the position of the only \mybox{1}. Shift the columns of $M$ cyclically such that this \mybox{1} moves to Column 1. $M$ is now reverted back to its original state.
\end{enumerate}

Each card in $C$ can be replaced by a stack of cards (as long as every stack has the same number of cards), and the protocol still works in the same way.

\subsection{Printing Protocol of Ruangwises and Iwamoto} \label{print}
In a recent work, Ruangwises and Iwamoto \cite{decom} developed the following printing protocol.

A $p \times q$ matrix of face-down cards (called a \textit{template}) and another $p \times q$ matrix of face-down cards of an area from the puzzle grid are given (the front side of all cards are known to $P$ but not to $V$). This protocol verifies that positions in the area corresponding to non-blank cards in the template are initially empty (consisting of all blank cards). Then, it places all non-blank cards from the template at the corresponding positions in the area, replacing the original blank cards (see Fig. \ref{fig3}) without revealing any card to $V$.

\begin{figure}
\centering
\begin{tikzpicture}
\node at (0,0.6) {\mybox{}};
\node at (0.5,0.6) {\mybox{}};
\node at (1,0.6) {\mybox{}};
\node at (1.5,0.6) {\mybox{}};
\node at (2,0.6) {\mybox{}};

\node at (0,1.2) {\mybox{1}};
\node at (0.5,1.2) {\mybox{2}};
\node at (1,1.2) {\mybox{3}};
\node at (1.5,1.2) {\mybox{}};
\node at (2,1.2) {\mybox{}};

\node at (0,1.8) {\mybox{}};
\node at (0.5,1.8) {\mybox{4}};
\node at (1,1.8) {\mybox{}};
\node at (1.5,1.8) {\mybox{}};
\node at (2,1.8) {\mybox{}};

\node at (1,0.1) {Template};

\node at (2.7,1.2) {\LARGE{+}};
\node at (1,-0.8) {};
\end{tikzpicture}
\begin{tikzpicture}
\node at (0,0.6) {\mybox{}};
\node at (0.5,0.6) {\mybox{1}};
\node at (1,0.6) {\mybox{2}};
\node at (1.5,0.6) {\mybox{3}};
\node at (2,0.6) {\mybox{4}};

\node at (0,1.2) {\mybox{}};
\node at (0.5,1.2) {\mybox{}};
\node at (1,1.2) {\mybox{}};
\node at (1.5,1.2) {\mybox{5}};
\node at (2,1.2) {\mybox{}};

\node at (0,1.8) {\mybox{}};
\node at (0.5,1.8) {\mybox{}};
\node at (1,1.8) {\mybox{}};
\node at (1.5,1.8) {\mybox{}};
\node at (2,1.8) {\mybox{}};

\node at (1,0.1) {Area};

\node at (2.8,1.2) {\LARGE{$\Rightarrow$}};
\node at (1,-0.8) {};
\end{tikzpicture}
\begin{tikzpicture}
\node at (0,0.6) {\mybox{}};
\node at (0.5,0.6) {\mybox{1}};
\node at (1,0.6) {\mybox{2}};
\node at (1.5,0.6) {\mybox{3}};
\node at (2,0.6) {\mybox{4}};

\node at (0,1.2) {\mybox{1}};
\node at (0.5,1.2) {\mybox{2}};
\node at (1,1.2) {\mybox{3}};
\node at (1.5,1.2) {\mybox{5}};
\node at (2,1.2) {\mybox{}};

\node at (0,1.8) {\mybox{}};
\node at (0.5,1.8) {\mybox{4}};
\node at (1,1.8) {\mybox{}};
\node at (1.5,1.8) {\mybox{}};
\node at (2,1.8) {\mybox{}};

\node at (1,0.1) {Area};
\node at (1,-0.8) {};
\end{tikzpicture}

\begin{tikzpicture}
\node at (0,0.6) {\mybox{}};
\node at (0.5,0.6) {\mybox{5}};
\node at (1,0.6) {\mybox{}};
\node at (1.5,0.6) {\mybox{}};
\node at (2,0.6) {\mybox{}};

\node at (0,1.2) {\mybox{1}};
\node at (0.5,1.2) {\mybox{2}};
\node at (1,1.2) {\mybox{3}};
\node at (1.5,1.2) {\mybox{}};
\node at (2,1.2) {\mybox{}};

\node at (0,1.8) {\mybox{}};
\node at (0.5,1.8) {\mybox{4}};
\node at (1,1.8) {\mybox{}};
\node at (1.5,1.8) {\mybox{}};
\node at (2,1.8) {\mybox{}};

\node at (1,0.1) {Template};

\node at (2.7,1.2) {\LARGE{+}};
\end{tikzpicture}
\begin{tikzpicture}
\node at (0,0.6) {\mybox{}};
\node at (0.5,0.6) {\mybox{1}};
\node at (1,0.6) {\mybox{2}};
\node at (1.5,0.6) {\mybox{3}};
\node at (2,0.6) {\mybox{4}};

\node at (0,1.2) {\mybox{}};
\node at (0.5,1.2) {\mybox{}};
\node at (1,1.2) {\mybox{}};
\node at (1.5,1.2) {\mybox{5}};
\node at (2,1.2) {\mybox{}};

\node at (0,1.8) {\mybox{}};
\node at (0.5,1.8) {\mybox{}};
\node at (1,1.8) {\mybox{}};
\node at (1.5,1.8) {\mybox{}};
\node at (2,1.8) {\mybox{}};

\node at (1,0.1) {Area};

\node at (2.8,1.2) {\LARGE{$\Rightarrow$}};
\end{tikzpicture}
\begin{tikzpicture}
\node at (-0.15,-0.05) {};
\node at (2.15,1.95) {};

\node at (1,1.2) {\LARGE{Reject}};
\end{tikzpicture}
\caption{Examples of a successful print from a $3 \times 5$ template onto a $3 \times 5$ area (top) and an unsuccessful print due to overlapping non-blank cards (bottom)}
\label{fig3}
\end{figure}
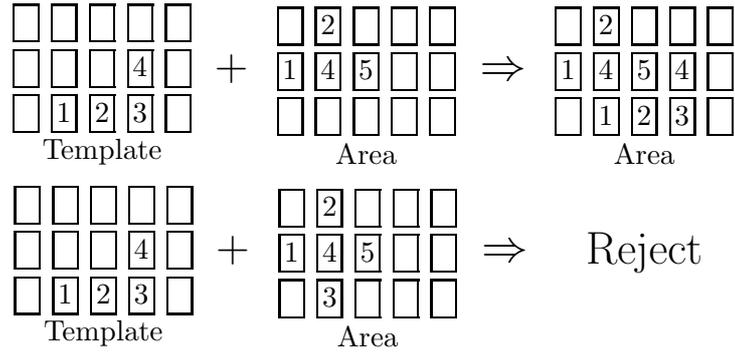

In the printing protocol, $P$ performs the following steps.
\begin{enumerate}
	\item Place each card from the template on top of a corresponding card from the area, creating $pq$ two-card stacks.
	\item For each of the $pq$ stacks, perform the following steps.
	\begin{enumerate}
		\item Apply the chosen cut protocol to select a blank card. (If the preconditions are met, at least one card must be blank; if both cards are blank, $P$ can select any of them.)
		\item Reveal that the selected card is blank (otherwise $V$ rejects) and remove it from the stack.
	\end{enumerate}
\end{enumerate}

After these steps, all non-blank cards from the template are placed at the corresponding positions in the area, and $V$ is convinced that these positions in the area were initially empty.

For decomposition puzzles with simple constraints, $P$ can repeatedly apply the printing protocol to print each region of the partition according to $P$'s solution onto the puzzle grid, convincing $V$ that these regions do not overlap.

However, a major limitation of this protocol is that each region is separately printed, so it cannot check constraints involving relationships between multiple regions. Therefore, it cannot be used to verify solutions of Tatami puzzles.

\section{Our Proposed Protocol: Tatami Printer} \label{tatami}
Considering the limitation of the printing protocol, we aim to modify the protocol so that it can also check the corner constraint of Tatami puzzles.

First, observe that if no four rectangles share a corner point, each point inside (or on the boundary of) the puzzle grid can be a corner point of at most two rectangles. Threfore, on each cell we add a ``corner counter'' to keep track of the number of rectangles having the top-left corner of that cell as their corner point.

On top of each card in a template, we place a \mybox{$\clubsuit$} if its top-left corner is a corner point of a rectangle the template represents, and a \mybox{$\heartsuit$} if its top-left corner is not a corner point. Also, on top of each card in the puzzle grid, we place two cards (initially both \mybox{$\heartsuit$}s) to keep track of the number of rectangles having its top-left corner as their corner point (which is the number of \mybox{$\clubsuit$}s among them).

\begin{figure}
\centering
\begin{tikzpicture}
\node at (0,0.9) {\mybox{$\heartsuit$}};
\node at (0.5,0.9) {\mybox{$\heartsuit$}};
\node at (1,0.9) {\mybox{$\heartsuit$}};
\node at (1.5,0.9) {\mybox{$\heartsuit$}};

\node at (0,1.5) {\mybox{$\clubsuit$}};
\node at (0.5,1.5) {\mybox{$\heartsuit$}};
\node at (1,1.5) {\mybox{$\clubsuit$}};
\node at (1.5,1.5) {\mybox{$\heartsuit$}};

\node at (0,2.1) {\mybox{$\heartsuit$}};
\node at (0.5,2.1) {\mybox{$\heartsuit$}};
\node at (1,2.1) {\mybox{$\heartsuit$}};
\node at (1.5,2.1) {\mybox{$\heartsuit$}};

\node at (0,2.7) {\mybox{$\clubsuit$}};
\node at (0.5,2.7) {\mybox{$\heartsuit$}};
\node at (1,2.7) {\mybox{$\clubsuit$}};
\node at (1.5,2.7) {\mybox{$\heartsuit$}};

\node at (0,0) {\mybox{1}};
\node at (0.5,0) {\mybox{1}};
\node at (1,0) {\mybox{}};
\node at (1.5,0) {\mybox{}};

\node at (0,-0.6) {\mybox{1}};
\node at (0.5,-0.6) {\mybox{1}};
\node at (1,-0.6) {\mybox{}};
\node at (1.5,-0.6) {\mybox{}};

\node at (0,-1.2) {\mybox{}};
\node at (0.5,-1.2) {\mybox{}};
\node at (1,-1.2) {\mybox{}};
\node at (1.5,-1.2) {\mybox{}};

\node at (0,-1.8) {\mybox{}};
\node at (0.5,-1.8) {\mybox{}};
\node at (1,-1.8) {\mybox{}};
\node at (1.5,-1.8) {\mybox{}};

\node at (0.75,-2.3) {Template};

\node at (2.2,0.45) {\LARGE{+}};
\end{tikzpicture}
\begin{tikzpicture}
\node at (0,0.9) {\mybox{$\heartsuit$}\mybox{$\heartsuit$}};
\node at (1,0.9) {\mybox{$\heartsuit$}\mybox{$\heartsuit$}};
\node at (2,0.9) {\mybox{$\clubsuit$}\mybox{$\heartsuit$}};
\node at (3,0.9) {\mybox{$\clubsuit$}\mybox{$\heartsuit$}};

\node at (0,1.5) {\mybox{$\heartsuit$}\mybox{$\heartsuit$}};
\node at (1,1.5) {\mybox{$\heartsuit$}\mybox{$\heartsuit$}};
\node at (2,1.5) {\mybox{$\heartsuit$}\mybox{$\heartsuit$}};
\node at (3,1.5) {\mybox{$\heartsuit$}\mybox{$\heartsuit$}};

\node at (0,2.1) {\mybox{$\heartsuit$}\mybox{$\heartsuit$}};
\node at (1,2.1) {\mybox{$\heartsuit$}\mybox{$\heartsuit$}};
\node at (2,2.1) {\mybox{$\heartsuit$}\mybox{$\heartsuit$}};
\node at (3,2.1) {\mybox{$\heartsuit$}\mybox{$\heartsuit$}};

\node at (0,2.7) {\mybox{$\heartsuit$}\mybox{$\heartsuit$}};
\node at (1,2.7) {\mybox{$\heartsuit$}\mybox{$\heartsuit$}};
\node at (2,2.7) {\mybox{$\clubsuit$}\mybox{$\heartsuit$}};
\node at (3,2.7) {\mybox{$\clubsuit$}\mybox{$\heartsuit$}};

\node at (0,0) {\mybox{}};
\node at (1,0) {\mybox{}};
\node at (2,0) {\mybox{2}};
\node at (3,0) {\mybox{}};

\node at (0,-0.6) {\mybox{}};
\node at (1,-0.6) {\mybox{}};
\node at (2,-0.6) {\mybox{2}};
\node at (3,-0.6) {\mybox{}};

\node at (0,-1.2) {\mybox{}};
\node at (1,-1.2) {\mybox{}};
\node at (2,-1.2) {\mybox{2}};
\node at (3,-1.2) {\mybox{}};

\node at (0,-1.8) {\mybox{}};
\node at (1,-1.8) {\mybox{}};
\node at (2,-1.8) {\mybox{}};
\node at (3,-1.8) {\mybox{}};

\node at (1.5,-2.3) {Area};

\node at (4,0.45) {\LARGE{$\Rightarrow$}};
\end{tikzpicture}
\begin{tikzpicture}
\node at (0,0.9) {\mybox{$\heartsuit$}\mybox{$\heartsuit$}};
\node at (1,0.9) {\mybox{$\heartsuit$}\mybox{$\heartsuit$}};
\node at (2,0.9) {\mybox{$\clubsuit$}\mybox{$\heartsuit$}};
\node at (3,0.9) {\mybox{$\clubsuit$}\mybox{$\heartsuit$}};

\node at (0,1.5) {\mybox{$\clubsuit$}\mybox{$\heartsuit$}};
\node at (1,1.5) {\mybox{$\heartsuit$}\mybox{$\heartsuit$}};
\node at (2,1.5) {\mybox{$\clubsuit$}\mybox{$\heartsuit$}};
\node at (3,1.5) {\mybox{$\heartsuit$}\mybox{$\heartsuit$}};

\node at (0,2.1) {\mybox{$\heartsuit$}\mybox{$\heartsuit$}};
\node at (1,2.1) {\mybox{$\heartsuit$}\mybox{$\heartsuit$}};
\node at (2,2.1) {\mybox{$\heartsuit$}\mybox{$\heartsuit$}};
\node at (3,2.1) {\mybox{$\heartsuit$}\mybox{$\heartsuit$}};

\node at (0,2.7) {\mybox{$\clubsuit$}\mybox{$\heartsuit$}};
\node at (1,2.7) {\mybox{$\heartsuit$}\mybox{$\heartsuit$}};
\node at (2,2.7) {\mybox{$\clubsuit$}\mybox{$\clubsuit$}};
\node at (3,2.7) {\mybox{$\clubsuit$}\mybox{$\heartsuit$}};

\node at (0,0) {\mybox{1}};
\node at (1,0) {\mybox{1}};
\node at (2,0) {\mybox{2}};
\node at (3,0) {\mybox{}};

\node at (0,-0.6) {\mybox{1}};
\node at (1,-0.6) {\mybox{1}};
\node at (2,-0.6) {\mybox{2}};
\node at (3,-0.6) {\mybox{}};

\node at (0,-1.2) {\mybox{}};
\node at (1,-1.2) {\mybox{}};
\node at (2,-1.2) {\mybox{2}};
\node at (3,-1.2) {\mybox{}};

\node at (0,-1.8) {\mybox{}};
\node at (1,-1.8) {\mybox{}};
\node at (2,-1.8) {\mybox{}};
\node at (3,-1.8) {\mybox{}};

\node at (1.5,-2.3) {Area};
\end{tikzpicture}
\caption{An example of a successful print from a $4 \times 4$ template onto a $4 \times 4$ area, with the top matrices being the corner counter part and the bottom matrices being the main part}
\label{fig4}
\end{figure}
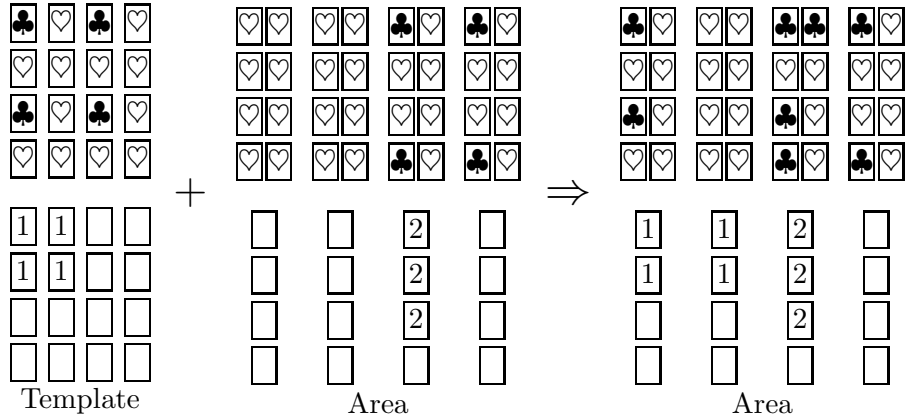

When printing a $p \times q$ template onto a $p \times q$ area from the grid (the front side of all cards are known to $P$ but not to $V$), $P$ performs the following steps.
\begin{enumerate}
	\item In the template, separate the top cards of all stacks (called the \textit{corner counter part}) from the bottom cards (called the \textit{main part}) and put them in another $p \times q$ matrix of cards. Also, in the area, separate the top two cards of all stacks (the corner counter part) from the bottom cards (the main part) and put them in another $p \times q$ matrix of two-card stacks (see Fig. \ref{fig4}).
	\item For the main part, place each card from the template on top of a corresponding card from the area, creating $pq$ two-card stacks.
	\item For each of the $pq$ stacks, perform the following steps.
	\begin{enumerate}
		\item Apply the chosen cut protocol to select a blank card. (If the preconditions are met, at least one card must be blank; if both cards are blank, $P$ can select any of them.)
		\item Reveal that the selected card is blank (otherwise $V$ rejects) and remove it from the stack.
	\end{enumerate}
	\item For the corner counter part, place each card from the template on top of a corresponding two-card stack from the area, creating $pq$ three-card stacks.
	\item For each of the $pq$ stacks, perform the following steps.
	\begin{enumerate}
		\item Apply the chosen cut protocol to select a \mybox{$\heartsuit$}. (If the preconditions are met, at least one card must be a \mybox{$\heartsuit$}; if both cards are \mybox{$\heartsuit$}s, $P$ can select any of them.)
		\item Reveal that the selected card is a \mybox{$\heartsuit$} (otherwise $V$ rejects) and remove it from the stack.
	\end{enumerate}
\end{enumerate}

After these steps, all non-blank cards from the main part of the template are placed at the corresponding positions in the area. All \mybox{$\clubsuit$}s from the corner counter part of the template are also added to the corresponding positions in the area. $V$ is convinced that these positions in the area were initially empty, and that each point in the area is a corner point of at most two rectangles.

\section{ZKP Protocol for Tatamibari} \label{tatamibari}
First, from a solution of Tatamibari, one can fill a symbol on every cell according to the rule of the puzzle (a $+$, $|$, or $-$ depending on the size of a rectangle that cell belongs to). We call this instance an \textit{extended solution} of the puzzle.

The key observation is that in the extended solution of Tatamibari, there are only $mn$ different sizes of rectangles (as there are $m$ possible heights and $n$ possible widths), and the symbol on every cell in a rectangle of each size is fixed. Therefore, only $mn$ different templates are required.

In our protocol, $P$ constructs $mn$ templates, one for each size of rectangle. Each template has size $(m+1) \times (n+1)$ (to support a corner counter of the bottom-right corner), and the rectangle is placed at the top-left corner of the template. A cell inside the rectangle is represented by a card with a symbol according to the rule, while a cell outside the rectangle is represented by a blank card (see Fig. \ref{figA}).

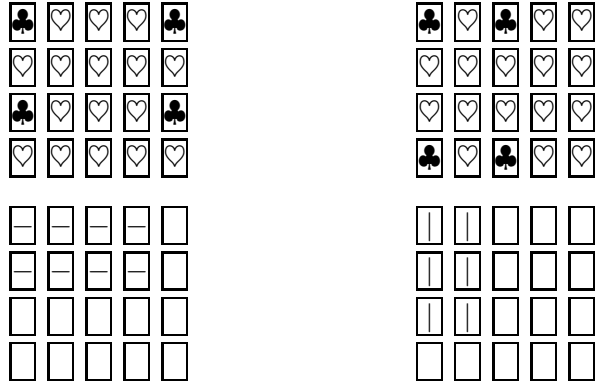
\begin{figure}
\centering
\begin{tikzpicture}
\node at (0,0.9) {\mybox{$\heartsuit$}};
\node at (0.5,0.9) {\mybox{$\heartsuit$}};
\node at (1,0.9) {\mybox{$\heartsuit$}};
\node at (1.5,0.9) {\mybox{$\heartsuit$}};
\node at (2,0.9) {\mybox{$\heartsuit$}};

\node at (0,1.5) {\mybox{$\clubsuit$}};
\node at (0.5,1.5) {\mybox{$\heartsuit$}};
\node at (1,1.5) {\mybox{$\heartsuit$}};
\node at (1.5,1.5) {\mybox{$\heartsuit$}};
\node at (2,1.5) {\mybox{$\clubsuit$}};

\node at (0,2.1) {\mybox{$\heartsuit$}};
\node at (0.5,2.1) {\mybox{$\heartsuit$}};
\node at (1,2.1) {\mybox{$\heartsuit$}};
\node at (1.5,2.1) {\mybox{$\heartsuit$}};
\node at (2,2.1) {\mybox{$\heartsuit$}};

\node at (0,2.7) {\mybox{$\clubsuit$}};
\node at (0.5,2.7) {\mybox{$\heartsuit$}};
\node at (1,2.7) {\mybox{$\heartsuit$}};
\node at (1.5,2.7) {\mybox{$\heartsuit$}};
\node at (2,2.7) {\mybox{$\clubsuit$}};

\node at (0,0) {\mybox{$-$}};
\node at (0.5,0) {\mybox{$-$}};
\node at (1,0) {\mybox{$-$}};
\node at (1.5,0) {\mybox{$-$}};
\node at (2,0) {\mybox{}};

\node at (0,-0.6) {\mybox{$-$}};
\node at (0.5,-0.6) {\mybox{$-$}};
\node at (1,-0.6) {\mybox{$-$}};
\node at (1.5,-0.6) {\mybox{$-$}};
\node at (2,-0.6) {\mybox{}};

\node at (0,-1.2) {\mybox{}};
\node at (0.5,-1.2) {\mybox{}};
\node at (1,-1.2) {\mybox{}};
\node at (1.5,-1.2) {\mybox{}};
\node at (2,-1.2) {\mybox{}};

\node at (0,-1.8) {\mybox{}};
\node at (0.5,-1.8) {\mybox{}};
\node at (1,-1.8) {\mybox{}};
\node at (1.5,-1.8) {\mybox{}};
\node at (2,-1.8) {\mybox{}};
\end{tikzpicture}
\hspace{2.5cm}
\begin{tikzpicture}
\node at (0,0.9) {\mybox{$\clubsuit$}};
\node at (0.5,0.9) {\mybox{$\heartsuit$}};
\node at (1,0.9) {\mybox{$\clubsuit$}};
\node at (1.5,0.9) {\mybox{$\heartsuit$}};
\node at (2,0.9) {\mybox{$\heartsuit$}};

\node at (0,1.5) {\mybox{$\heartsuit$}};
\node at (0.5,1.5) {\mybox{$\heartsuit$}};
\node at (1,1.5) {\mybox{$\heartsuit$}};
\node at (1.5,1.5) {\mybox{$\heartsuit$}};
\node at (2,1.5) {\mybox{$\heartsuit$}};

\node at (0,2.1) {\mybox{$\heartsuit$}};
\node at (0.5,2.1) {\mybox{$\heartsuit$}};
\node at (1,2.1) {\mybox{$\heartsuit$}};
\node at (1.5,2.1) {\mybox{$\heartsuit$}};
\node at (2,2.1) {\mybox{$\heartsuit$}};

\node at (0,2.7) {\mybox{$\clubsuit$}};
\node at (0.5,2.7) {\mybox{$\heartsuit$}};
\node at (1,2.7) {\mybox{$\clubsuit$}};
\node at (1.5,2.7) {\mybox{$\heartsuit$}};
\node at (2,2.7) {\mybox{$\heartsuit$}};

\node at (0,0) {\mybox{$|$}};
\node at (0.5,0) {\mybox{$|$}};
\node at (1,0) {\mybox{}};
\node at (1.5,0) {\mybox{}};
\node at (2,0) {\mybox{}};

\node at (0,-0.6) {\mybox{$|$}};
\node at (0.5,-0.6) {\mybox{$|$}};
\node at (1,-0.6) {\mybox{}};
\node at (1.5,-0.6) {\mybox{}};
\node at (2,-0.6) {\mybox{}};

\node at (0,-1.2) {\mybox{$|$}};
\node at (0.5,-1.2) {\mybox{$|$}};
\node at (1,-1.2) {\mybox{}};
\node at (1.5,-1.2) {\mybox{}};
\node at (2,-1.2) {\mybox{}};

\node at (0,-1.8) {\mybox{}};
\node at (0.5,-1.8) {\mybox{}};
\node at (1,-1.8) {\mybox{}};
\node at (1.5,-1.8) {\mybox{}};
\node at (2,-1.8) {\mybox{}};
\end{tikzpicture}
\caption{$4 \times 5$ templates of rectangles with sizes $2 \times 4$ (left) and $3 \times 2$ (right), with the top matrices being the corner counter part and the bottom matrices being the main part}
\label{figA}
\end{figure}

\subsection{Main Protocol}
Initially, $P$ publicly places two \mybox{$\heartsuit$}s on top of a blank card on every cell in the grid. To handle edge cases, $P$ publicly appends $m$ rows and $n$ columns of ``dummy'' stacks to the bottom and to the right of the grid. Then, $P$ turns all cards face-down. We now have an $2m \times 2n$ matrix of three-card stacks.

Observe that if we arrange all $4mn$ stacks in the matrix into a single sequence $A=(a_1,a_2,...,a_{4mn})$, starting at the top-left corner and going from left to right in Row 1, then from left to right in Row 2, and so on, we can locate exactly where the four neighbors of any given card are. Namely, the cards on the neighbor to the left, right, top, and bottom of a cell containing $a_i$ are $a_{i-1}$, $a_{i+1}$, $a_{i-2n}$, and $a_{i+2n}$, respectively. Therefore, we can select any area from the grid by applying the chosen cut protocol to select the top-left corner cell of that area, and the rest will follow as the chosen cut protocol preserves the cyclic order.

$P$ also constructs $mn$ templates of all $mn$ sizes of rectangle and lets $V$ verify that all templates are correct (otherwise $V$ rejects).

Suppose that in $P$'s extended solution, the grid is partitioned into $k$ rectangles $B_1,B_2,$ $...,B_k$. Note that $k$ is public information, which is the number of cells containing a symbol in the original puzzle. For each $i=1,2,...,k$, $P$ performs the following steps.

\begin{enumerate}
	\item Apply the chosen cut protocol to select an $(m+1) \times (n+1)$ area whose top-left corner is the top-left corner of $B_i$.
	\item Apply the chosen cut protocol to select a template of a rectangle with the same size as $B_i$.
	\item Apply the Tatami printer to the selected template and area.
	\item Reconstruct a template that has just been used and replenish the pile of templates with it. Let $V$ verify again that all $mn$ templates are correct (otherwise $V$ rejects). Note that $V$ does not know which template has just been used.
\end{enumerate}

Finally, $P$ reveals all cards from the main part on the cells that contain a symbol in the original puzzle. $V$ verifies that the symbols on the cards match the ones in the cells (otherwise $V$ rejects). $P$ also reveals the main part of all dummy stacks that they are still blank (otherwise $V$ rejects). If all verification steps pass, then $V$ accepts.

This protocol uses $\Theta(m^2n^2)$ cards and $\Theta(kmn)$ shuffles.

\subsection{Proof of Correctness and Security}
We will prove the perfect completeness, perfect soundness, and zero-knowledge properties of this protocol.

\begin{lemma}[Perfect Completeness] \label{lem1}
If $P$ knows a solution of the Tatamibari puzzle, then $V$ always accepts.
\end{lemma}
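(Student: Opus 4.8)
The plan is to show that, given a valid extended solution, none of the verification checks in the protocol can ever trigger a rejection, and that the cards revealed at the very end match what the original puzzle prescribes; perfect completeness follows immediately. I would organize everything around a single invariant maintained across the $k$ printing rounds: after $P$ has printed $B_1,\dots,B_i$, each cell's main card carries the symbol of whichever rectangle among $B_1,\dots,B_i$ covers it (and is blank if none does), while the number of $\clubsuit$s in that cell's two-card corner counter equals the number of rectangles among $B_1,\dots,B_i$ having the cell's top-left grid point as one of their corners.

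First I would dispatch the selection steps. The chosen cut protocol never rejects and returns whatever card $P$ designates, and a valid solution supplies, for each $B_i$, a genuine top-left corner cell and a template of the matching size (among the $mn$ templates $P$ built); hence Steps 1--2 always succeed. The appended $m$ dummy rows and $n$ dummy columns guarantee that the $(m+1)\times(n+1)$ area selected in Step 1 always lies within the $2m\times 2n$ matrix, so no out-of-range issue arises.

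Next, the Tatami printer itself. For the main part, since $P$'s rectangles partition the grid they are pairwise disjoint, so when printing $B_i$ onto a cell $c$: if $c\in B_i$ the area card is still blank by the invariant, and if $c\notin B_i$ the template card is blank; either way each two-card stack contains a blank, so the chosen cut in Step 3 meets its precondition and the reveal never rejects. For the corner counter, I would invoke the observation already established in Section~\ref{tatami}: since no four rectangles share a corner point, every grid point is a corner of at most two rectangles. By the invariant each counter therefore holds at most two $\clubsuit$s before $B_i$ is printed, so the three-card stack formed in Step 4 always contains a $\heartsuit$ (either the one supplied by the template, or one of the at most two $\heartsuit$s remaining in the counter), and Step 5 never rejects. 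A short case check (template card $\heartsuit$ or $\clubsuit$, crossed with the current club count) confirms that removing one $\heartsuit$ reproduces exactly the updated invariant.

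Finally, Step 4's template re-verification passes because an honest $P$ reconstructs the used template exactly, and the closing reveals succeed because the invariant after round $k$ says each cell carries the symbol of its covering rectangle, which by validity of the extended solution agrees with the symbol printed on the original puzzle at every symbol cell, while every dummy cell, covered by no real rectangle, still shows a blank main card. I expect the only delicate point to be the corner-counter bookkeeping: one must check that the ``at most two $\clubsuit$s'' bound truly leaves a $\heartsuit$ to remove in each sub-case and that the removal never disturbs the recorded count. This rests entirely on the at-most-two-corners observation, so the geometry behind that observation, rather than any card manipulation, is the real content of completeness.
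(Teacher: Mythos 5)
Your proposal is correct and follows essentially the same route as the paper's proof: the partition property guarantees a blank card in every main-part stack, the corner constraint guarantees a $\heartsuit$ in every corner-counter stack, honest template reconstruction makes Step 4 pass, and the final reveal matches the extended solution. You add an explicit invariant and a finer case analysis than the paper (which states these facts in one line each); just note that the precise bound needed in the $\clubsuit$-template case is that the counter holds at most \emph{one} $\clubsuit$ before printing (since $B_i$ itself accounts for one of the at most two corners), which your deferred case check would surface.
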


\begin{proof}
Suppose $P$ knows an extended solution of the puzzle. Consider each $i$-th iteration of the main protocol.

\begin{itemize}
	\item In Step 3, since $B_1,B_2,...,B_k$ form a partition of the puzzle grid, $B_i$ does not overlap with $B_1,B_2,...,B_{i-1}$. Also, from the corner constraint, each point in the grid can be a corner point of at most two rectangles. Thus, the Tatami printer will pass.
	
	\item In Step 4, since $P$ reconstructs a template that has just been used, all $mn$ templates are correct, and thus this step will pass.
\end{itemize}

Therefore, every iteration will pass. After $k$ iterations, all symbols in $P$'s extended solution will be printed on the grid, so all symbols in the original puzzle will match the ones on the corresponding cards.

Hence, we can conclude that $V$ always accepts.
\end{proof}

\begin{lemma}[Perfect Soundness] \label{lem2}
If $P$ does not know a solution of the Tatamibari puzzle, then $V$ always rejects.
\end{lemma}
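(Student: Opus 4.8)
The plan is to prove the contrapositive in its strong form: I will show that any run in which $V$ accepts forces the cards $P$ has committed to encode a genuine Tatamibari solution, so that a cheating $P$ who does not know a solution can never be accepted. The first step is a reduction that strips away the randomness. Every datum $V$ ever sees revealed --- a main-part card being blank, a corner-counter card being a $\heartsuit$, the symbol on a clue cell, the blankness of a dummy cell --- is invariant under the cyclic column shifts of the pile-shifting shuffle and under the chosen cut, so the accept/reject outcome is a deterministic function of the faces $P$ commits together with $P$'s choices in the $k$ rounds. Because $V$ re-verifies after every round that all $mn$ templates are the correct ones, $P$ has no freedom in the templates; the only freedom is the sequence of (position, size) pairs, i.e. the placement of $k$ rectangles $B_1,\dots,B_k$, one per round.

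Next I would read off precisely what acceptance certifies about $B_1,\dots,B_k$. The main-part stage of the Tatami printer rejects unless each freshly printed rectangle meets only blank cells, so $B_1,\dots,B_k$ are pairwise disjoint; and the final dummy reveal rejects unless every $B_i$ lies inside the true $m\times n$ grid. The corner-counter stage starts each cell with $\heartsuit\heartsuit$ and, at every print, stacks one template card on the two stored cards and removes a $\heartsuit$; it therefore rejects the third $\clubsuit$ placed at any point, so no grid point is a corner of three or more of the $B_i$. This already yields the Tatamibari corner constraint (``no four rectangles share a corner'') directly, with no further hypothesis. Finally, the clue reveal forces each clue cell to carry the symbol printed on it, and since the templates are correct, the rectangle covering each clue has exactly the shape dictated by that clue (a $+$ giving a square, a $|$ a tall rectangle, a $-$ a wide one).

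The heart of the argument is to upgrade these facts to the statement that $B_1,\dots,B_k$ form a valid solution, namely that they tile the entire grid and that each contains exactly one clue. The natural route is a counting argument. There are exactly $k$ clue cells and exactly $k$ rectangles; by disjointness each clue lies in at most one $B_i$, and the clue reveal shows every clue is covered, hence each clue lies in exactly one $B_i$. If I can show in addition that no $B_i$ is clue-free, then, since there are equally many clues and rectangles, pigeonhole forces the clue-to-rectangle correspondence to be a bijection, giving exactly one clue per rectangle; and if I can show the $B_i$ leave no cell uncovered, I obtain the partition.

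I expect this last global upgrade to be the main obstacle, because it is exactly the place where the local per-round checks give no direct leverage. Disjointness and the corner bound are enforced cell-by-cell and corner-by-corner, but nothing in a single print detects whether that rectangle happens to contain no clue, nor whether some non-clue, non-dummy cell is never printed on --- such cells are never revealed, so their emptiness is invisible to $V$. The argument must therefore extract full coverage and the absence of a ``wasted'' clue-free rectangle purely from the exact equality between the number of rounds and the number of clues, propagated through the disjointness and coverage bookkeeping. This is where essentially all of the difficulty concentrates; the shape, symbol, and corner conditions are already secured by the correctness of the templates and by the corner counter, so the soundness proof stands or falls on making the coverage-and-one-clue deduction rigorous.
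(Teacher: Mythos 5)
Your proposal reproduces all the local facts that the paper's proof extracts---pairwise disjointness of the printed rectangles from the main-part blank check, containment in the real grid from the dummy reveal, the corner constraint from the corner-counter check, and the correct shape and symbol of the rectangle covering each clue from the repeated template verification---but it stops exactly where the proof has to finish. You never establish that $B_1,\dots,B_k$ cover the whole grid or that each $B_i$ contains exactly one clue, and you explicitly say you do not see how to derive these from the per-round checks. As written, the conclusion ``$P$ must know a valid solution'' is never reached, so this is a genuine gap: the proposal is a proof plan with its hardest step left open, not a proof.

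That said, your diagnosis of where the difficulty sits is more honest than the paper's treatment of it. The paper dispatches the step in one sentence (``since the final verification passes, the combined area of $B_1,\dots,B_k$ must cover the whole puzzle grid''), but the final verification only turns over the clue cells and the dummy stacks; the blankness of an uncovered non-clue cell inside the grid is never observed by $V$. Nor does the counting argument you gesture at close the gap: with $k$ clues and $k$ printing rounds, a prover can print $k$ unit squares, one on each clue cell of a puzzle whose clues are all $+$, satisfying every revealed check (disjointness, dummy blankness, at most two $\clubsuit$s per point, matching symbols) while leaving the rest of the grid uncovered---and such a puzzle need not be solvable, e.g.\ a $2\times 2$ grid with $+$ clues at two diagonally opposite cells. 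So the step you could not make rigorous is not something the paper supplies and you missed; it is a point where the paper's own argument asserts more than the revealed information supports. Closing it requires strengthening the protocol (some way for $V$ to certify total coverage, or equivalently that no non-dummy main-part card remains blank, without leaking the symbols on non-clue cells), not a cleverer pigeonhole.
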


\begin{proof}
We will prove the contrapositive of this statement. Suppose $V$ accepts, which means the verification passes in all steps. Consider the main protocol.

Since Step 4 passes for every iteration, all $mn$ templates are correct after each iteration (and also at the beginning of the protocol), which implies the symbols printed in every iteration form a shape of a rectangle and follow the rule of the puzzle.

In Step 3, since the Tatami printer passes for every iteration, $B_i$ does not overlap with $B_1,B_2,...,B_{i-1}$ for every $i$. Also, since the final verification passes, the combined area of $B_1,B_2,...,B_k$ must cover the whole puzzle grid, i.e. $B_1,B_2,...,B_k$ form a partition of the grid. Furthermore, each point in the grid can be a corner point of at most two rectangles, so no four rectangles share a corner point.

Since the final verification passes, all symbols in the original puzzle match the ones on the corresponding cards.

Hence, we can conclude that the puzzle grid is partitioned into rectangles according to the rules, which implies $P$ must know a valid solution of the puzzle.
\end{proof}

\begin{lemma}[Zero-Knowledge] \label{lem3}
During the verification, $V$ obtains no information about $P$'s solution.
\end{lemma}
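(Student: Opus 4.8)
The plan is to exhibit a probabilistic polynomial-time simulator $S$ that, knowing only the public data of the puzzle (the dimensions $m,n$, the number $k$ of symbols, and the positions and types of the given symbols) but not $P$'s solution $w$, produces a transcript whose distribution is identical to that of a genuine execution. Everything $V$ observes during the protocol falls into three categories: (i) the random cyclic shifts performed inside each pile-shifting shuffle, (ii) the cards turned face-up during the chosen cut protocols and inside the Tatami printer, and (iii) the cards revealed in the final verification. I would show that each of these has a distribution independent of $w$, so that $S$ can reproduce them on its own.

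First I would analyze the chosen cut protocol as a black box. In Step 3 the only information disclosed is the position of the unique \mybox{1} in Row 2; because the pile-shifting shuffle applies a uniformly random cyclic shift to the $q$ columns, this position is uniform over $\{1,\dots,q\}$ and independent of the secretly chosen index $i$. The same holds for the \mybox{1} in Row 3 revealed in Step 6. Hence every invocation of the chosen cut protocol (for selecting an area, for selecting a template, and inside the Tatami printer) leaks to $V$ only a uniformly random column position, which $S$ can sample by itself.

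Next I would treat the reveals inside the Tatami printer and in the final verification. By the guarantee of the Tatami printer, the card uncovered in each stack of the main part is always blank and the card uncovered in each stack of the corner counter part is always a \mybox{$\heartsuit$}; these outcomes are deterministic and carry no information about $w$. In the final step $P$ reveals the symbols on the cells that already contain a symbol in the original puzzle, together with the blank main parts of the dummy stacks; all of these values are fixed by the public puzzle instance, not by the particular partition (crucially, the symbols printed on non-given cells and the region boundaries are never revealed). Finally, the template check of Step 4 always displays the same full collection of $mn$ correct templates in the same order, since the chosen cut protocol reverts the template sequence to its original state; thus $V$ never learns which size was printed in a given iteration.

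With these observations the simulator is straightforward: for each pile-shifting shuffle $S$ draws an independent uniform cyclic shift; at each Tatami-printer reveal it outputs the fixed blank or \mybox{$\heartsuit$} card; at the final reveal it outputs the publicly known given symbols and the blank dummies; and at each template check it displays the $mn$ correct templates. Since every revealed value matches the real distribution exactly and the shuffles are genuinely uniform, the output of $S$ and of the real protocol follow the same probability distribution, giving perfect zero-knowledge. The main obstacle I anticipate is not any single reveal but confirming that the two chosen cut protocols in Steps 1 and 2 leak \emph{only} a uniform shift: one must check that selecting the top-left corner of $B_i$ (together with the induced $(m+1)\times(n+1)$ neighborhood in the linear order $A$) and separately selecting a template of the matching size never correlates the two observed shift positions with the geometry or size of $B_i$. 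This reduces to the cyclic-order-preservation property of the chosen cut protocol established in Section~\ref{chosen}, but it is the point where care is required.
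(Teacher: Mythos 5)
Your proposal is correct and follows essentially the same route as the paper: uniformity of the revealed \mybox{1} positions from the pile-shifting shuffles, determinism of the reveals in the Tatami printer, and determinism of the template checks, assembled into a simulator. You additionally treat the final-verification reveals (given symbols and blank dummies) explicitly, which the paper's proof leaves implicit; this is a harmless and slightly more complete touch, not a different argument.
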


\begin{proof}
To prove the zero-knowledge property, we will show that any interaction between $P$ and $V$ can be simulated by a simulator $S$ that does not know $P$'s solution. It is sufficient to prove that all distributions of cards that are turned face-up can be simulated by $S$.

\begin{itemize}
	\item In Steps 3 and 6 of the chosen cut protocol in Section \ref{chosen}, because of the pile-shifting shuffles, the \mybox{1} has probability $1/q$ to be at each of the $q$ columns. Therefore, these two steps can be simulated by $S$.

	\item In Steps 3(b) and 5(b) of the Tatami printer in Section \ref{tatami}, there is only one deterministic pattern of the cards that are turned face-up. Therefore, these two steps can be simulated by $S$.
	
	\item In Step 4 of the main protocol, there is only one deterministic pattern of the cards that are turned face-up (all correct templates). Therefore, this step can be simulated by $S$.
\end{itemize}

Hence, we can conclude that $V$ obtains no information about $P$'s solution.
\end{proof}

\section{ZKP Protocol for Square Jam} \label{squarejam}
First, from a solution of Square Jam, one can fill an integer on every cell according to the rule of the puzzle (equal to the side length of a square that cell belongs to). We call this instance an \textit{extended solution} of the puzzle.

The key observation is that in the extended solution of Square Jam, there are only $n$ different sizes of square, and the integer on every cell in a square of each size is fixed. Therefore, only $n$ different templates are required.

However, unlike in Tatamibari, the number of squares in $P$'s solution is not public information, so $P$ cannot reveal that number to $V$. Therefore, $P$ has to apply the Tatami printer for $n^2$ times, sometimes printing nothing, as $V$ knows that there are at most $n^2$ squares.

In our protocol, $P$ constructs $n$ templates, one for each size of square. Each template has size $(n+1) \times (n+1)$, and the square is placed at the top-left corner of the template. A cell inside the square is represented by a card with a number equal to the side length of that square, while a cell outside the square is represented by a blank card (see Fig. \ref{figB}). In addition, $P$ constructs a \textit{blank template} of the same size, with the main part consisting of all blank cards and the corner counter part consisting of all \mybox{$\heartsuit$}s.

\begin{figure}
\centering
\begin{tikzpicture}
\node at (0,0.9) {\mybox{$\heartsuit$}};
\node at (0.5,0.9) {\mybox{$\heartsuit$}};
\node at (1,0.9) {\mybox{$\heartsuit$}};
\node at (1.5,0.9) {\mybox{$\heartsuit$}};

\node at (0,1.5) {\mybox{$\clubsuit$}};
\node at (0.5,1.5) {\mybox{$\heartsuit$}};
\node at (1,1.5) {\mybox{$\clubsuit$}};
\node at (1.5,1.5) {\mybox{$\heartsuit$}};

\node at (0,2.1) {\mybox{$\heartsuit$}};
\node at (0.5,2.1) {\mybox{$\heartsuit$}};
\node at (1,2.1) {\mybox{$\heartsuit$}};
\node at (1.5,2.1) {\mybox{$\heartsuit$}};

\node at (0,2.7) {\mybox{$\clubsuit$}};
\node at (0.5,2.7) {\mybox{$\heartsuit$}};
\node at (1,2.7) {\mybox{$\clubsuit$}};
\node at (1.5,2.7) {\mybox{$\heartsuit$}};

\node at (0,0) {\mybox{2}};
\node at (0.5,0) {\mybox{2}};
\node at (1,0) {\mybox{}};
\node at (1.5,0) {\mybox{}};

\node at (0,-0.6) {\mybox{2}};
\node at (0.5,-0.6) {\mybox{2}};
\node at (1,-0.6) {\mybox{}};
\node at (1.5,-0.6) {\mybox{}};

\node at (0,-1.2) {\mybox{}};
\node at (0.5,-1.2) {\mybox{}};
\node at (1,-1.2) {\mybox{}};
\node at (1.5,-1.2) {\mybox{}};

\node at (0,-1.8) {\mybox{}};
\node at (0.5,-1.8) {\mybox{}};
\node at (1,-1.8) {\mybox{}};
\node at (1.5,-1.8) {\mybox{}};
\end{tikzpicture}
\hspace{2.5cm}
\begin{tikzpicture}
\node at (0,0.9) {\mybox{$\clubsuit$}};
\node at (0.5,0.9) {\mybox{$\heartsuit$}};
\node at (1,0.9) {\mybox{$\heartsuit$}};
\node at (1.5,0.9) {\mybox{$\clubsuit$}};

\node at (0,1.5) {\mybox{$\heartsuit$}};
\node at (0.5,1.5) {\mybox{$\heartsuit$}};
\node at (1,1.5) {\mybox{$\heartsuit$}};
\node at (1.5,1.5) {\mybox{$\heartsuit$}};

\node at (0,2.1) {\mybox{$\heartsuit$}};
\node at (0.5,2.1) {\mybox{$\heartsuit$}};
\node at (1,2.1) {\mybox{$\heartsuit$}};
\node at (1.5,2.1) {\mybox{$\heartsuit$}};

\node at (0,2.7) {\mybox{$\clubsuit$}};
\node at (0.5,2.7) {\mybox{$\heartsuit$}};
\node at (1,2.7) {\mybox{$\heartsuit$}};
\node at (1.5,2.7) {\mybox{$\clubsuit$}};

\node at (0,0) {\mybox{3}};
\node at (0.5,0) {\mybox{3}};
\node at (1,0) {\mybox{3}};
\node at (1.5,0) {\mybox{}};

\node at (0,-0.6) {\mybox{3}};
\node at (0.5,-0.6) {\mybox{3}};
\node at (1,-0.6) {\mybox{3}};
\node at (1.5,-0.6) {\mybox{}};

\node at (0,-1.2) {\mybox{3}};
\node at (0.5,-1.2) {\mybox{3}};
\node at (1,-1.2) {\mybox{3}};
\node at (1.5,-1.2) {\mybox{}};

\node at (0,-1.8) {\mybox{}};
\node at (0.5,-1.8) {\mybox{}};
\node at (1,-1.8) {\mybox{}};
\node at (1.5,-1.8) {\mybox{}};
\end{tikzpicture}
\caption{$4 \times 4$ templates of squares with side lengths 2 (left) and 4 (right), with the top matrices being the corner counter part and the bottom matrices being the main part}
\label{figB}
\end{figure}
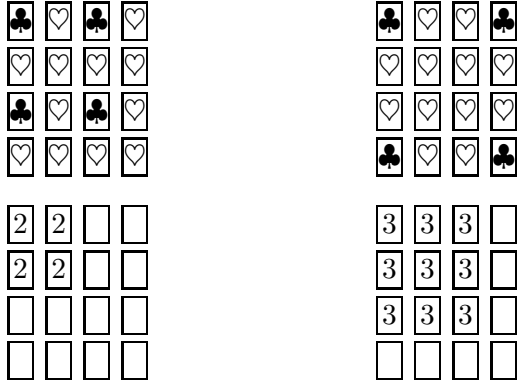

\subsection{Main Protocol}
Initially, $P$ publicly places two \mybox{$\heartsuit$}s on top of a blank card on every cell in the grid. To handle edge cases, $P$ publicly appends $n$ rows and $n$ columns of ``dummy'' stacks to the bottom and to the right of the grid. Then, $P$ turns all cards face-down. We now have a $2n \times 2n$ matrix of three-card stacks.

$P$ also constructs templates of all $n$ sizes of square, plus a blank template, and lets $V$ verify that all $n+1$ templates are correct (otherwise $V$ rejects).

Suppose that in $P$'s extended solution, the grid is partitioned into $k$ squares $B_1,B_2,...,B_k$. For each $i=1,2,...,n^2$, $P$ performs the following steps.

\begin{enumerate}
	\item Apply the chosen cut protocol to select an $(n+1) \times (n+1)$ area whose top-left corner is the top-left corner of $B_i$ (if $i>k$, select any area).
	\item Apply the chosen cut protocol to select a template of a square with the same size as $B_i$ (if $i>k$, select a blank template).
	\item Apply the Tatami printer to the selected template and area.
	\item Reconstruct a template that has just been used and replenish the pile of templates with it. Let $V$ verify again that all $n+1$ templates are correct (otherwise $V$ rejects).
\end{enumerate}

Finally, $P$ reveals all cards from the main part on the cells that contain an integer in the original puzzle. $V$ verifies that the integers on the cards match the ones in the cells (otherwise $V$ rejects). $P$ also reveals the main part of all dummy stacks that they are still blank (otherwise $V$ rejects). If all verification steps pass, then $V$ accepts.

This protocol uses $\Theta(n^3)$ cards and $\Theta(n^4)$ shuffles.

\subsection{Proof of Correctness and Security}
We will prove the perfect completeness, perfect soundness, and zero-knowledge properties of this protocol.

\begin{lemma}[Perfect Completeness] \label{lem4}
If $P$ knows a solution of the Square Jam puzzle, then $V$ always accepts.
\end{lemma}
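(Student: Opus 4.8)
The structure mirrors the completeness proof for Tatamibari (Lemma 1), but adapted to the fact that the number of squares $k$ is not public and so the main loop runs for exactly $n^2$ iterations rather than $k$. The plan is to suppose $P$ knows an extended solution and show that every verification step passes. I would split the $n^2$ iterations into two groups: the "real" iterations $i = 1, 2, \ldots, k$, which print the actual squares $B_1, \ldots, B_k$, and the "dummy" iterations $i = k+1, \ldots, n^2$, which print the blank template onto an arbitrarily chosen area.

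For the real iterations, the argument is essentially identical to Tatamibari: since $B_1, \ldots, B_k$ partition the grid, each $B_i$ is disjoint from the previously printed squares, so in Step 3 the main part of the Tatami printer succeeds (every target position is still blank); and by the corner constraint, each grid point is a corner of at most two squares, so the corner counter part succeeds as well. Hence the Tatami printer passes. For the dummy iterations, the key point is that the blank template has an all-blank main part and an all-\mybox{$\heartsuit$} corner counter part, so printing it onto \emph{any} area trivially satisfies both preconditions of the Tatami printer: selecting a blank card in Step 3(a) and a \mybox{$\heartsuit$} in Step 5(a) always succeeds because the template contributes a blank/\mybox{$\heartsuit$} to every stack. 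Thus the Tatami printer passes regardless of which area is selected. In Step 4 of every iteration, $P$ reconstructs and replaces the template just used, so all $n+1$ templates remain correct and that verification passes too.

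It then remains to check the final verification. Since the real iterations print exactly $B_1, \ldots, B_k$ and these cover the whole grid, every cell of the original grid receives the integer dictated by $P$'s extended solution, which by construction of the templates equals the side length of the square containing it. Hence the integers revealed on cells containing a clue match the puzzle, and every dummy stack remains blank (the appended rows and columns are never overwritten, since all printed squares lie within the grid and the blank template writes nothing). Therefore all final checks pass and $V$ accepts.

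The only genuinely new ingredient compared to Lemma 1 is the handling of the dummy iterations, so the step I would be most careful about is verifying that printing the blank template never disturbs the invariants: it must neither corrupt an existing square nor spuriously increment a corner counter. This follows immediately because the blank template's corner counter part is all \mybox{$\heartsuit$}s, contributing no \mybox{$\clubsuit$} to any counter, and its main part is all blank, overwriting nothing. I expect this to be the main (though still routine) obstacle, and I would state it explicitly to make clear why running the loop a fixed $n^2$ times—rather than $k$ times—preserves completeness.
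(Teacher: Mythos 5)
Your proposal is correct and follows essentially the same route as the paper's proof: handle the first $k$ iterations exactly as in the Tatamibari completeness argument, and observe that the remaining iterations print only the blank template and therefore pass trivially without modifying the grid. Your additional explicit justification that the blank template contributes no non-blank card and no \mybox{$\clubsuit$} is a slightly more detailed version of what the paper states in one sentence, but it is not a different argument.
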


\begin{proof}
Suppose $P$ knows an extended solution of the puzzle. Consider each $i$-th iteration of the main protocol ($i \leq k$).

\begin{itemize}
	\item In Step 3, since $B_1,B_2,...,B_k$ form a partition of the puzzle grid, $B_i$ does not overlap with $B_1,B_2,...,B_{i-1}$. Also, from the corner constraint, each point in the grid can be a corner point of at most two rectangles. Thus, the Tatami printer will pass.
	
	\item In Step 4, since $P$ reconstructs a template that has just been used, all $n+1$ templates are correct, and thus this step will pass.
\end{itemize}

Therefore, the first $k$ iterations will pass. All subsequent iterations just print a blank template (we call them \textit{trivial iterations}), so they will also pass. After the first $k$ iterations, all integers in $P$'s extended solution will be printed on the grid. Subsequent trivial iterations do not modify the grid, so all integers in the original puzzle will match the ones on the corresponding cards.

Hence, we can conclude that $V$ always accepts.
\end{proof}

\begin{lemma}[Perfect Soundness] \label{lem5}
If $P$ does not know a solution of the Square Jam puzzle, then $V$ always rejects.
\end{lemma}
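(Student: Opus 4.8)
The plan is to prove the contrapositive, exactly as in the proof of Lemma \ref{lem2}: assume that $V$ accepts, so that every verification step succeeds, and deduce that $P$'s printed squares constitute a valid Square Jam solution.

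First I would exploit the fact that Step 4 succeeds in all $n^2$ iterations (and also at the outset). This guarantees that the pile always consists of the $n$ correct size-templates together with one correct blank template. Consequently, in each iteration the Tatami printer either prints a genuine $s \times s$ square all of whose cells carry the integer $s$ (from a correct size-template), or prints nothing at all (from the blank template). In particular, every non-trivial print deposits a bona fide square whose common integer equals its side length, which is precisely the Square Jam rule for a single region.

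Next I would invoke the fact that Step 3 succeeds in every iteration. By the guarantees of the Tatami printer in Section \ref{tatami}, each newly printed square occupies cells that were previously empty, so the squares printed across the iterations are pairwise non-overlapping; moreover the corner-counter part certifies that every grid point is a corner of at most two of them, i.e. no four squares share a corner. The final check that all dummy stacks remain blank forces every printed square to lie entirely inside the $n \times n$ grid, since a size-template whose square protruded into the appended dummy rows or columns would leave a non-blank card there and be rejected.

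It remains to establish full coverage together with integer-matching, and this is where the feature that distinguishes this protocol from Tatamibari must be handled carefully: because the number of squares $k$ is not public, $P$ always performs exactly $n^2$ iterations, and the trivial (blank-template) iterations simply leave the grid unchanged. Since the grid begins entirely blank and the final reveal shows that every integer-cell of the original puzzle carries the matching integer while all dummy cells remain blank, the non-trivial squares tile the whole grid into squares obeying the rule, just as the combined area of $B_1,\dots,B_k$ is argued to cover the grid in the proof of Lemma \ref{lem2}. The hard part will be making this last step airtight: because $V$ cannot tell which iterations were trivial, the argument must show that soundness holds regardless of how many iterations $P$ spent on the blank template, and must derive coverage of the entire grid from the final reveal alone rather than from any knowledge of $k$. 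Once non-overlap, the corner constraint, full coverage, and integer-matching are all in place, I would conclude that the grid is partitioned into squares satisfying every Square Jam constraint, so $P$ must know a valid solution.
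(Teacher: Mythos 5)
Your proposal follows essentially the same route as the paper's proof: proving the contrapositive, using the success of Step 4 to certify that every nontrivial print is a correctly labeled square, using the Tatami printer's guarantees for non-overlap and the corner constraint, and using the final reveal for coverage and integer matching. The extra observations you add (the dummy stacks confining squares to the grid, and the caveat that coverage must be derived without knowing $k$) are consistent with, and if anything slightly more careful than, the paper's own argument.
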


\begin{proof}
We will prove the contrapositive of this statement. Suppose $V$ accepts, which means the verification passes in all steps. Consider the main protocol.

Since Step 4 passes for every iteration, all $n+1$ templates are correct after each iteration (and also at the beginning of the protocol), which implies the integers printed in every nontrivial iteration form a shape of a square and are equal to the side length of that square.

Let $D_1,D_2,...,D_\ell$ be the squares printed in each nontrivial iteration in this order, for some $\ell \leq n^2$. In Step 3, since the Tatami printer passes for every iteration, $D_i$ does not overlap with $D_1,D_2,...,D_{i-1}$ for every $i$. Also, since the final verification passes, the combined area of $D_1,D_2,...,D_\ell$ must cover the whole puzzle grid, i.e. $D_1,D_2,...,D_\ell$ form a partition of the grid. Furthermore, each point in the grid can be a corner point of at most two squares, so no four squares share a corner point.

Since the final verification passes, all integers in the original puzzle match the ones on the corresponding cards.

Hence, we can conclude that the puzzle grid is partitioned into squares according to the rules, which implies $P$ must know a valid solution of the puzzle.
\end{proof}

\begin{lemma}[Zero-Knowledge] \label{lem6}
During the verification, $V$ obtains no information about $P$'s solution.
\end{lemma}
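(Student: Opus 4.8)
The plan is to prove the zero-knowledge property by exhibiting a simulator $S$ that, without knowing $P$'s solution, reproduces exactly the distribution of all face-up cards seen by $V$ during the Square Jam protocol. As in the proof of Lemma \ref{lem3}, the key reduction is that the entire interaction is determined by the cards revealed in the various sub-protocols, so it suffices to show that each face-up reveal can be simulated with the correct distribution. I would therefore enumerate every point in the protocol at which cards are turned face up and argue, case by case, that the revealed pattern is either deterministic (independent of the solution) or uniformly random over a solution-independent set.

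The concrete steps mirror those of Lemma \ref{lem3}, adapted to the Square Jam main protocol. First, in Steps 3 and 6 of each invocation of the chosen cut protocol (Section \ref{chosen}), the pile-shifting shuffle places the single \mybox{1} uniformly at random among the $q$ columns, so this reveal is uniform and solution-independent, hence simulable by $S$. Second, in Steps 3(b) and 5(b) of the Tatami printer (Section \ref{tatami}), the revealed card is always a blank (for the main part) or a \mybox{$\heartsuit$} (for the corner counter part): a single deterministic pattern, hence trivially simulable. Third, in Step 4 of the main protocol, the reconstructed templates are always verified to be the $n+1$ correct templates, again a fixed deterministic pattern. Finally, the closing verification reveals the main-part cards on the clue cells, which must match the public clue integers, and the main part of the dummy stacks, which must all be blank; both outcomes are forced whenever $V$ accepts, so $S$ can produce them deterministically.

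The step I expect to require the most care, and the genuine point of difference from the Tatamibari proof, is the handling of the \emph{trivial iterations}. Because the number $k$ of squares in $P$'s solution is not public, the protocol always runs for exactly $n^2$ iterations, and for $i>k$ the prover selects an arbitrary area together with the blank template. The main obstacle is to confirm that these trivial iterations leak nothing: I must check that from $V$'s viewpoint a trivial iteration is indistinguishable from a nontrivial one. This holds because in every iteration the area and the template are selected \emph{via} the chosen cut protocol, whose only reveals are the uniform \mybox{1}-positions above, and because the blank template passes the Tatami printer with precisely the same deterministic face-up pattern (blanks and \mybox{$\heartsuit$}s) as any genuine template. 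Consequently $V$ never learns which iterations are trivial, so the value of $k$ — and thus the partition itself — stays hidden.

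Assembling these observations, I would conclude that the concatenation of all face-up distributions across the $n^2$ iterations and the final reveal is reproducible by a probabilistic polynomial-time $S$ that has oracle access to $V$ but not to the solution, and that the simulated transcript follows exactly the same distribution as the real one. Hence $V$ obtains no information about $P$'s solution, establishing the lemma.
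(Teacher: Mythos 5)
Your proposal is correct and follows essentially the same approach as the paper: exhibit a simulator and argue case by case that every face-up reveal (chosen cut, Tatami printer, template verification, final check) is either uniformly distributed over a solution-independent set or deterministic. Your explicit discussion of why the trivial iterations for $i>k$ are indistinguishable from nontrivial ones is a welcome elaboration of a point the paper leaves implicit, but it does not constitute a different route.
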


\begin{proof}
To prove the zero-knowledge property, we will show that any interaction between $P$ and $V$ can be simulated by a simulator $S$ that does not know $P$'s solution. It is sufficient to prove that all distributions of cards that are turned face-up can be simulated by $S$.

The zero-knowledge property of the chosen cut protocol and the Tatami printer has been proved in the proof of Lemma \ref{lem3}, so it is sufficient to consider only the main protocol.

In Step 4 of the main protocol, there is only one deterministic pattern of the cards that are turned face-up (all correct templates). Therefore, this step can be simulated by $S$.

Hence, we can conclude that $V$ obtains no information about $P$'s solution.
\end{proof}

\section{Future Work}
We developed the Tatami printer protocol, which can check the corner constraint of Tatami puzzles. However, this protocol still has a limitation that it cannot check other types of constraint involving relationships between multiple regions, e.g. a constraint in \textit{Fillmat} which specifies that horizontally or vertically adjacent rectangles cannot contain the same number. An interesting future work is to develop a technique to check such constraints.

\end{document}